\title{Battle Sheep is PSPACE-complete}
\author{Kyle Burke, Hirotaka Ono }
\newcommand{\rsBSheep}{\ruleset{Battle Sheep}}
\newcommand{\bcl}{\ruleset{B2CL}}
\newcommand{\ruleset}[1]{\textsc{#1}}
\newcommand{\cclass}[1]{\ensuremath{\mathord{\textrm{#1}}}} 
\newtheorem{theorem}{Theorem}
\newtheorem{open}{Open Problem}
\newtheorem{corollary}{Corollary}
\newcommand{\sheepHex}[4]{\node[regular polygon, regular polygon sides=6, minimum width=2cm, draw] (#1) at #4 {};
\node[circle, draw=#2, text=#2, minimum width=1cm, draw] (Stack-#1) at #4 {\large{#3}};}
\newcommand{\emptyHex}[3]{\node[regular polygon, regular polygon sides=6, minimum width=2cm, draw] (#1) at #3 {#2};}
\newcommand{\blockHex}[2]{\node[regular polygon, regular polygon sides=6, minimum width=2cm, draw] (#1) at #2 {};
\node[circle, minimum width=1cm, draw] (Block-#1) at #2 {\large{1}};}
\begin{document}

\maketitle

\begin{abstract}
    \rsBSheep{} is a board game published by Blue Orange Games.  With two players, it is a combinatorial game that uses normal play rules.  We show that it is \cclass{PSPACE}-complete, even when each stack has only up to 3 tokens.
\end{abstract}

\section{Introduction}

\subsection{Game Origin and Ruleset}

Battle Sheep\footnote{\url{https://www.blueorangegames.com/games/battle-sheep} .  It was originally given the name \emph{Splits}.} is a board game published by Blue Orange Games\footnote{\url{https://www.blueorangegames.com/}} at the time of this writing.  The game consists of stacks of sheep-styled tokens of different colors, with each color belonging to a player (Black, White, Blue, and Red).  The stacks are arranged on a board of hexagons that forms a part of a triangular grid.  The sum total of all tokens in stacks is equal to the number of spaces on the board.  A legend of our diagram notation is included in Figure \ref{fig:legend}. 

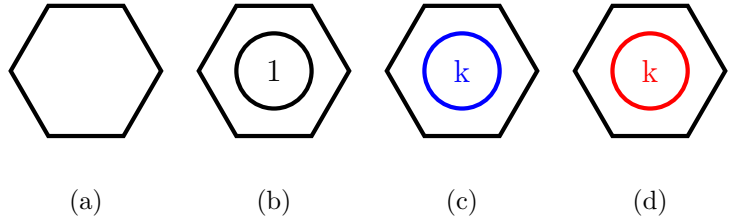
\begin{figure}[h!]
    \begin{center}
    \begin{tikzpicture}[node distance = .75cm, minimum size = .5cm, inner sep = .07cm, ultra thick]
        \emptyHex{blankA}{}{(0, 0)}
        \node[] (lbl1) at (0, -1.75) {(a)};
        \blockHex{blockA}{(2.5, 0)}
        \node[] (lbl2) at (2.5, -1.75) {(b)};
        \sheepHex{blue}{blue}{k}{(5,0)}
        \node[] (lbl3) at (5, -1.75) {(c)};
        \sheepHex{red}{red}{k}{(7.5,0)}
        \node[] (lbl4) at (7.5, -1.75) {(d)};
    \end{tikzpicture}
    \end{center}
    \caption{Legend: (a) is a blank space with no sheep tokens.  (b) is a space with 1 token.  Since the color doesn't matter, as there are no excess sheep to move, we leave them black.  (c) is a space with $k$ Blue sheep tokens.  (d) is a space with $k$ Red sheep tokens.}
    \label{fig:legend}
\end{figure} 

On their turn, a player chooses a stack of their color with at least two tokens, picks some sub-stack of the tokens to move (the bottom sheep token cannot move) and a direction towards a neighboring space.  The player then moves those tokens as far as they have a clear path in that direction, stopping either before running off the board or hitting another stack.  A player must go all the way; they can't choose to stop part way along that path.  Examples of allowed moves are shown in figure \ref{fig:moves}.

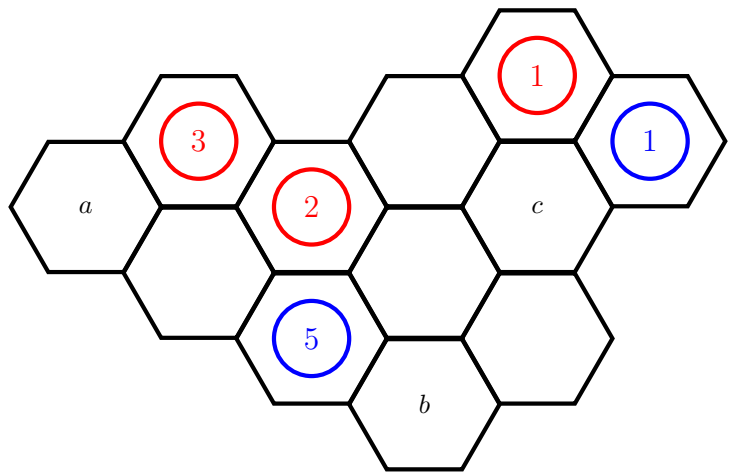
\begin{figure}[h!]
    \begin{center}
    \begin{tikzpicture}[node distance = .75cm, minimum size = .5cm, inner sep = .07cm, ultra thick]
        \emptyHex{center}{}{(0, 0)}
        \emptyHex{z}{}{(0, 1.75)}
        \sheepHex{red2}{red}{2}{(-1.5,.875)}
        \sheepHex{red3}{red}{3}{(-3.0, 1.75)}
        \emptyHex{a}{$a$}{(-4.5, .875)}
        \emptyHex{y}{}{(-3, 0)}
        \sheepHex{blue5}{blue}{5}{(-1.5,-.875)}
        \emptyHex{b}{$b$}{(0, -1.75)}
        \emptyHex{x}{}{(1.5, -.875)}
        \emptyHex{c}{$c$}{(1.5, .875)}
        \sheepHex{blue1}{blue}{1}{(3.0, 1.75)}
        \sheepHex{red1}{red}{1}{(1.5, 2.625)}
    \end{tikzpicture}
    \end{center}
    \caption{A position in Battle Sheep.  Blue's options are to move 1 to 4 tokens from the pile with 5 to space $a$, $b$, or $c$.  Blue cannot move straight up because there are tokens directly in the way.  They cannot move to the space between their stack and $a$ because they must move as far as possible in the chosen direction.  The same is true of the space between the 5-stack and $c$.  They cannot move from the stack with 1 sheep token because they must leave at least one token behind and there are not any excess tokens to move.}
    \label{fig:moves}
\end{figure} 

If no possible moves are available, either because all their stacks only have one token or all their stacks with more than one sheep are surrounded by other tokens and/or edges of the board, then that player loses.  

When played with two players, the rules function exactly as a combinatorial game under Normal Play:
\begin{itemize}
    \item Two players take turns making moves.
    \item There is no hidden information.
    \item There are no random elements.
    \item Normal Play: when a player can't move on their turn, they lose the game. \cite{WinningWays:2001}, \cite{LessonsInPlay:2007}, \cite{SiegelCGT:2013}
\end{itemize}
Because no changes need to be made, we use the same name for the ruleset: \rsBSheep{}.  We provide a playable version of \rsBSheep{} at \url{https://kyleburke.info/DB/combGames/battleSheep.html}.  

\subsection{Computational Complexity}

Computational hardness is used to give evidence that some problems are intractible: there may not be a polynomial-time algorithm to solve them.  This intractibility is demonstrated by showing that a problem can't be solved in polynomial time unless the most difficult problems for an unsolved computational complexity class can also be solved in polynomial time.  This classification is performed by finding a reduction from a known difficult problem to the problem in question.  A reduction is a transformation that preserves the result of the problem that can be performed in polynomial time \cite{Papadimitriou:1994}.  For example, in this paper, to prove that \rsBSheep{} is \cclass{PSPACE}-hard, we find a reduction, $f: \bcl{} \rightarrow \rsBSheep{}$ such that the first player wins \bcl{} position $x$ exactly when the first player wins \rsBSheep{} position $f(x)$.  

\section{\cclass{PSPACE}-completeness}

\begin{theorem}[main]
    \rsBSheep{} is \cclass{PSPACE}-hard when the total number of tokens is equal to the total number of spaces on the board.
\end{theorem}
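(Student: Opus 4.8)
The plan is to give a polynomial-time reduction $f:\bcl{}\rightarrow\rsBSheep{}$ of the kind described in the introduction: from a \bcl{} instance we build a \rsBSheep{} board, laid out as a large region of the triangular grid, whose legal play mimics legal play of the constraint-logic game move for move, so that the player who cannot move in one game is exactly the player who cannot move in the other. The first thing to pin down is the correspondence of the two \emph{bounded} features. In \rsBSheep{} a move only ever \emph{splits} a stack --- tokens are never created, destroyed, or merged --- so over a whole game the number of moves is at most the total number of tokens, which is the number of board spaces; thus every \rsBSheep{} game is polynomially long (this also yields membership in \cclass{PSPACE}, and hence, with the hardness proved here, the \cclass{PSPACE}-completeness claimed in the abstract). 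This matches the bounded nature of \bcl{}, where each edge may be reversed at most once, and it suggests the basic encoding: each constraint-graph edge becomes a short walled corridor of hexagons holding one movable stack of height~$2$; ``reversing'' the edge is the unique move that slides the top token to the far wall, after which both the source and the destination hold an immobile height-$1$ stack and that edge is permanently ``used,'' so the edge's state is simply recorded by whether and in which direction this slide has happened, which a vertex gadget reads through the blocking $1$-stacks left behind.

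Next I would build the component gadgets out of hexagon corridors whose walls are height-$1$ stacks (which can never move, since a player must leave the bottom token behind). Movable stacks always slide maximally, so every corridor length and wall placement is chosen so that the intended reversal is the unique legal destination of the only available move in that gadget. \textbf{Edge/wire gadgets} carry an orientation signal between vertices; \textbf{\textsc{and} and \textsc{or} vertex gadgets} are small sub-boards in which the presence or absence of sheep at the incident corridor-ends gates whether a neighboring edge may be reversed, thereby enforcing the minimum-inflow constraints of constraint logic. Because a stack only fragments, the ``token budget'' spent reversing edges is exactly the budget of a full \bcl{} play, so the initial stack heights are set so that precisely this budget is available and no spurious ``junk'' slide is ever possible. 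Since \bcl{} remains \cclass{PSPACE}-complete for planar constraint graphs built from the standard \textsc{and}/\textsc{or} primitives, the graph is embedded in the grid by routing corridors, with a crossover gadget included only if the chosen planar normal form makes it necessary.

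Turn order and ownership are handled by color: every movable stack inside a first-player edge gadget gets the first player's color and every movable stack inside a second-player edge gadget gets the other color, so that on each turn the legal \rsBSheep{} moves are in bijection with the legal \bcl{} reversals available to the player on move, while all remaining stacks are frozen (height~$1$ or fully surrounded). A terminal gadget is attached so that the normal-play losing condition of \rsBSheep{} --- no legal move --- is reached exactly when the corresponding player loses \bcl{}. Finally, the board is padded so that the total token count equals the number of spaces: extra height-$1$ stacks change both counts equally, while any surplus of tokens is absorbed into tall stacks sealed inside fully enclosed cells, which are immobile and create no new moves. All gadgets have constant size and the routing is polynomial, so $f$ is computable in polynomial time, and correctness follows by an induction on play that the above bijection of moves is maintained.

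The step I expect to be the main obstacle is making each intended move \emph{forced and unique}: because the distance a stack slides depends on the global configuration, I must design the corridors and walls --- possibly using staircase-shaped channels and carefully staged height-$1$ blockers --- so that (i) at every reachable position the only legal slides are the ones corresponding to legal \bcl{} reversals for the player to move, and (ii) each such slide stops at exactly the hexagon the gadget expects. Coupled with this is the no-merging bookkeeping: after an edge is used, its leftover single tokens and all padding stacks must be verifiably immobile, so the token-conservation rule neither blocks a needed move nor enables an unintended one. Verifying these two invariants gadget by gadget, and across gadget boundaries, is where the real work of the proof will lie.
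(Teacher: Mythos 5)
Your proposal shares only the starting point with the paper (a reduction from \bcl{}); the architecture you sketch---a move-for-move bijection between edge reversals and sheep slides, with each player owning the stacks in ``their'' edge gadgets and the loser being whoever first runs out of moves in either game---is not what the paper does, and as structured it has two genuine gaps. First, the win conditions do not line up the way you assert: \bcl{} is won by reversing one's goal edge, not by a normal-play ``last player to move wins'' condition, so there is no ``player who cannot move'' in \bcl{} to put in bijection with the stuck player in \rsBSheep{}. The paper converts one condition into the other by an explicit move-counting argument: Blue alone performs all signal-propagation moves inside the gadgets (an active signal is an \emph{empty} output hex that lets Blue's excess token advance), Red interacts only in the Variable gadgets (a race for the first move there) and otherwise burns turns in a ``Makeup'' gadget sized to exactly $k=a+b+4c+4d+5e$ filler moves, so that Blue gets the last move precisely when the circuit can be satisfied and the Goal gadget activated. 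Your proposal has no analogue of this calibration, and without it the normal-play outcome of the \rsBSheep{} position is not determined by the \bcl{} outcome.

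Second, the step you yourself flag as ``the main obstacle''---forcing every reachable position to offer only the slides corresponding to legal \bcl{} reversals, each stopping exactly where intended---is not a technicality to be checked gadget by gadget; it is the reason the bijection approach is unlikely to work at all. Neither player can be compelled to move inside the gadget your simulation expects: a player may always split a different stack, move a different substack size, or play in a far-away gadget ``out of turn,'' and maximal sliding makes stopping points depend on global state. The paper's design deliberately avoids needing any such forcing: since the circuits are positive, it is always in Blue's interest to activate outputs, all gadget moves except the Variable race belong to Blue, and any deviation or wasted move by either player only shifts the move count in the opponent's favor, which the Makeup-gadget parity argument then punishes. If you want to salvage your outline, you would need to abandon the strict bijection and adopt this kind of one-sided, budget-counting framework (or supply a genuinely new forcing mechanism), and you would also need gadgets realizing the specific \bcl{} primitives (Variable, And, Or, Choice, Fanout/Split, Goal) rather than generic ``corridors gated by inflow,'' which your sketch does not yet construct.
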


\begin{proof}
    We prove this by reducing from \ruleset{Bounded Two-Player Constraint Logic} (\bcl{}).  To complete this reduction, we only need to demonstrate a working construction for the Variable, Goal, Split, Choice, And, and Or gadgets, as this includes \cclass{PSPACE}-hard instances of \bcl{}  \cite{DBLP:books/daglib/0023750}.  Without loss of generality, we will assume the next player to move is Blue.  Our reduction works if Blue can win the resulting \rsBSheep{} position exactly when they can win the \bcl{} position.  In \rsBSheep{}, Blue will win when they can
    \begin{itemize}
        \item Move in such a way that satisfies the boolean-logic nature of all gadgets, and
        \item Move on the final Goal gadget.
    \end{itemize}

    The \bcl{} positions we reduce from model positive circuits, meaning there are no negations.  It will always be in Blue's interest to activate the output of any gadget, as long as the gate logic of the gadget holds.  We will enforce this last point by providing Red with independent extra moves so that Blue will need to make a specific number of plays on each gadget.  If they do not, then Red will have additional moves to make after Blue has made all their moves, including the winning move on the Goal gadget.  This is a common construction strategy for combinatorial game hardness reductions.  \cite{BurkeHearn2018}

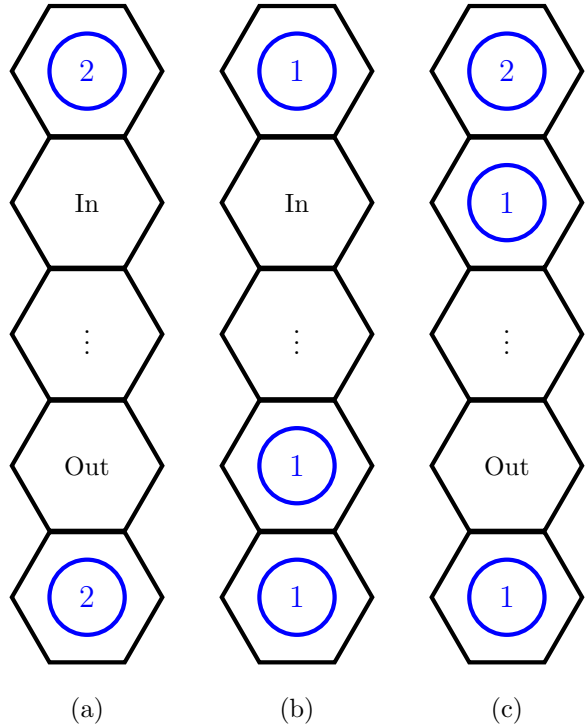
\begin{figure}[h!]
    \begin{center}
    \begin{tikzpicture}[node distance = .75cm, minimum size = .5cm, inner sep = .07cm, ultra thick]
        \sheepHex{blueIn}{blue}{2}{(0,0)}
        \emptyHex{EmptyIn}{In}{(0, -1.75)}
        \emptyHex{ellipse}{$\vdots$}{(0, -3.5)}
        \emptyHex{EmptyOut}{Out}{(0, -5.25)}
        \sheepHex{blueOut}{blue}{2}{(0, -7)}
        \node[] (lbl) at (0, -8.5) {(a)};
    \end{tikzpicture}
    \hspace{.5cm}
    \begin{tikzpicture}[node distance = .75cm, minimum size = .5cm, inner sep = .07cm, ultra thick]
        \sheepHex{blueIn}{blue}{1}{(0,0)}
        \emptyHex{EmptyIn}{In}{(0, -1.75)}
        \emptyHex{ellipse}{$\vdots$}{(0, -3.5)}
        \sheepHex{bluePlayed}{blue}{1}{(0, -5.25)}
        \sheepHex{blueOut}{blue}{1}{(0, -7)}
        \node[] (lbl) at (0, -8.5) {(b)};
    \end{tikzpicture}
    \hspace{.5cm}
    \begin{tikzpicture}[node distance = .75cm, minimum size = .5cm, inner sep = .07cm, ultra thick]
        \sheepHex{blueIn}{blue}{2}{(0,0)}
        \sheepHex{bluePlayed}{blue}{1}{(0, -1.75)}
        \emptyHex{ellipse}{$\vdots$}{(0, -3.5)}
        \emptyHex{EmptyOut}{Out}{(0, -5.25)}
        \sheepHex{blueOut}{blue}{1}{(0, -7)}
        \node[] (lbl) at (0, -8.5) {(c)};
    \end{tikzpicture}
    \end{center}
    \caption{A Wire between two gadgets.  The output of the bottom gadget is connected to the input of the top gadget.  The ellipsis denotes that there could be many hexagons along that path.  (a) is how the Wire begins as a connection between two gadgets.  (b) is the result of the case where Blue was able to play the token below the output to another location, meaning the extra token on the input side could come down and block it.  This is the active case, where a positive signal is sent from the lower gadget to the upper.  (c) is the result of the case where Blue was not able to play a token onto the lower gadget, and instead had to send it up, propagating an inactive signal.  Blue will have to find another place to play the top token of the upper stack.}
    \label{fig:connection}
\end{figure}

    To model this, we model the active signal with a space left empty on the output side, leaving Blue a space to move to as the input for the gadget on the other side.  Each gadget will have input or output hexes (most have both).  The connections between them will work as shown in figure \ref{fig:connection}.  The active signal (or "on" or "true") is passed when the token next to the output is able to move elsewhere aside from towards the input.  This allows the extra sheep token adjacent to the output to move down to the output space, freeing up space in the top gadget.

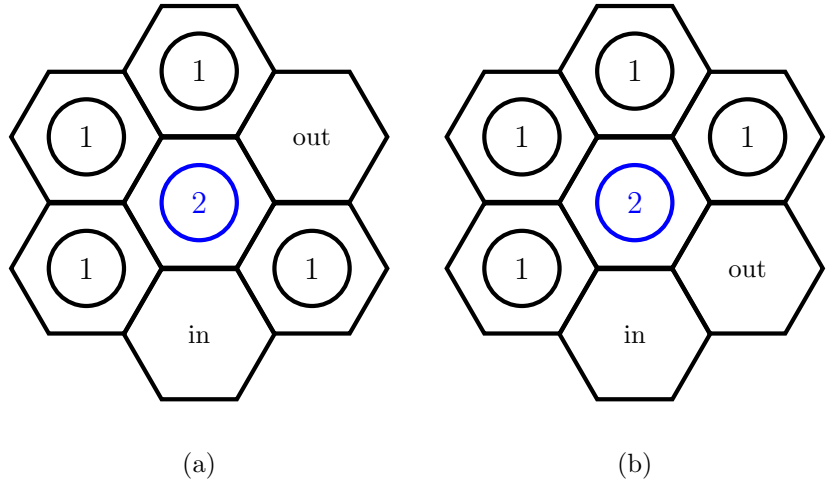
\begin{figure}[h!]
    \begin{center}
    \begin{tikzpicture}[node distance = .75cm, minimum size = .5cm, inner sep = .07cm, ultra thick]
        \sheepHex{blue}{blue}{2}{(0,0)}
        \emptyHex{in}{in}{(0, -1.75)}
        \emptyHex{out}{out}{(1.5, .875)}
        \blockHex{blockA}{(-1.5, -.875)}
        \blockHex{blockB}{(-1.5, .875)}
        \blockHex{blockC}{(0, 1.75)}
        \blockHex{blockD}{(1.5, -.875)}
        \node[] (lbl) at (0, -3.5) {(a)};
    \end{tikzpicture}
    \hspace{.5cm}
    \begin{tikzpicture}[node distance = .75cm, minimum size = .5cm, inner sep = .07cm, ultra thick]
        \sheepHex{blue}{blue}{2}{(0,0)}
        \emptyHex{in}{in}{(0, -1.75)}
        \emptyHex{out}{out}{(1.5, -.875)}
        \blockHex{blockA}{(-1.5, -.875)}
        \blockHex{blockB}{(-1.5, .875)}
        \blockHex{blockC}{(0, 1.75)}
        \blockHex{blockD}{(1.5, .875)}
        \node[] (lbl) at (0, -3.5) {(b)};
    \end{tikzpicture}
    \end{center}
    \caption{Wire turns of 30 and 60 degrees.  In both cases, an active input allows the output to be activated.}
    \label{fig:turns}
\end{figure}

    In figure \ref{fig:turns}, we show how we can adjust the directions in wires with a turning gadget.  Turns of 30 and 60 degrees in either direction allow full flexibility to arrange the gadgets with each other.

\begin{figure}[h!]
    \begin{center}
    \begin{tikzpicture}[node distance = .75cm, minimum size = .5cm, inner sep = .07cm, ultra thick]
        \node[regular polygon, regular polygon sides=6, minimum width=2cm,draw] (top) at (0,0){};
        \node[circle, draw=blue, text=blue, minimum width=1cm, draw] (topStack) at (0,0){\large{2}};
        
        \node[regular polygon, regular polygon sides=6, minimum width=2cm,draw] (border1) at (0, 1.75){};
        \node[circle, minimum width=1cm, draw] (border1Stack) at (0,1.75){\large{1}};
        
        \node[regular polygon, regular polygon sides=6, minimum width=2cm,draw] (border2) at (1.5, .875){};
        \node[circle, minimum width=1cm, draw] (border2Stack) at (1.5, .875){\large{1}};
        
        \node[regular polygon, regular polygon sides=6, minimum width=2cm,draw] (border3) at (1.5, -.875){};
        \node[circle, minimum width=1cm, draw] (border3Stack) at (1.5, -.875){\large{1}};
        
        \node[regular polygon, regular polygon sides=6, minimum width=2cm,draw] (border4) at (-1.5, -.875){};
        \node[circle, minimum width=1cm, draw] (border4Stack) at (-1.5, -.875){\large{1}};
        
        \node[regular polygon, regular polygon sides=6, minimum width=2cm,draw] (border5) at (-1.5, .875){};
        \node[circle, minimum width=1cm, draw] (border5Stack) at (-1.5, .875){\large{1}};
        
        \node[regular polygon, regular polygon sides=6, minimum width=2cm,draw] (reg1) at (0, -1.75){In};
    \end{tikzpicture}
    \end{center}
    \caption{Goal gadget.  If the input is active (empty), then the Blue player is able to move one token there to win the game.  The black sheep stacks can belong to either player since there is only one sheep in each stack.}
    \label{fig:goal}
\end{figure}
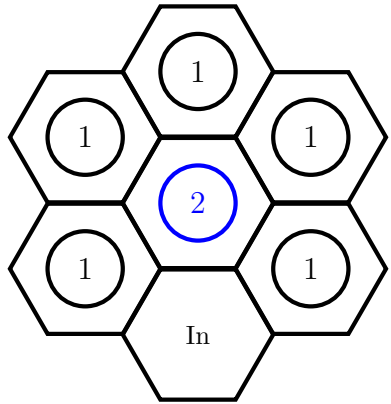
    
    Of the gadgets needed to fulfill the \bcl{} requirements, he most simple is the Goal gadget, shown in figure \ref{fig:goal}.  If the input to the Goal is active (empty) after all other gadgets have been completed correctly, then Red should have made all moves available to them and Blue wins by moving a sheep to the space labeled In.  If not, then Blue will be unable to move the extra token in the goal gadget and will lose the game.

\begin{figure}[h!]
    \begin{center}
    \begin{tikzpicture}[node distance = .75cm, minimum size = .5cm, inner sep = .07cm, ultra thick]
        \sheepHex{red}{red}{2}{(0,0)}
        \emptyHex{EmptyA}{}{(0, -1.75)}
        \emptyHex{EmptyB}{}{(1.5, -.875)}
        \sheepHex{blue}{blue}{2}{(1.5, .875)}
        \emptyHex{EmptyC}{Out}{(1.5, 2.625)}
        \blockHex{blockA}{(0, 1.75)}
        \blockHex{blockB}{(-1.5, .875)}
        \blockHex{blockC}{(-1.5, -.875)}
        \blockHex{blockD}{(-1.5, -2.625)}
        \blockHex{blockE}{(0, -3.5)}
        \blockHex{blockF}{(1.5, -2.625)}
        \blockHex{blockG}{(3.0, -1.75)}
        \blockHex{blockG}{(3.0, 0)}
        \blockHex{blockG}{(3.0, 1.75)}
    \end{tikzpicture}
    \end{center}
    \caption{Variable gadget at its initial state.}
    \label{fig:variable}
\end{figure}
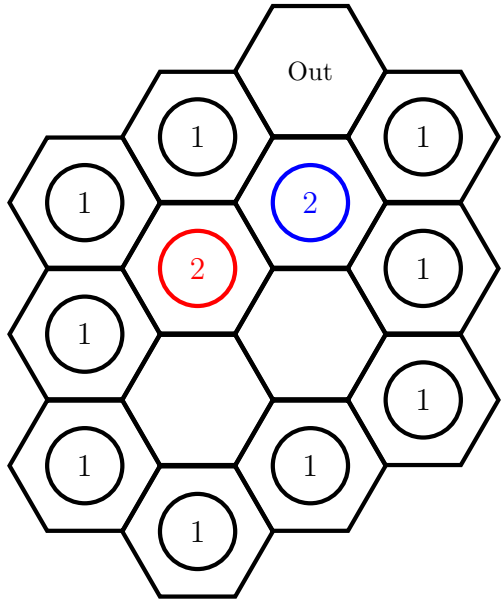

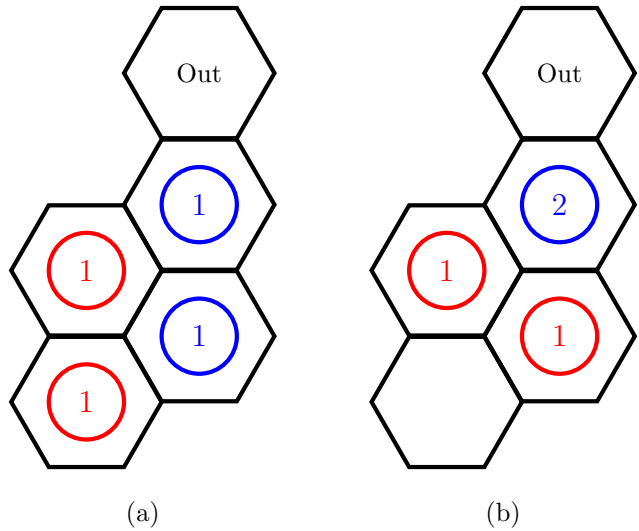
\begin{figure}[h!]
    \begin{center}
    \begin{tikzpicture}[node distance = .75cm, minimum size = .5cm, inner sep = .07cm, ultra thick]
        \sheepHex{red}{red}{1}{(0,0)}
        \sheepHex{redMove}{red}{1}{(0, -1.75)}
        \sheepHex{blueMove}{blue}{1}{(1.5, -.875)}
        \sheepHex{blue}{blue}{1}{(1.5, .875)}
        \emptyHex{EmptyC}{Out}{(1.5, 2.625)}
        \node[] (lbl) at (.75, -3.25) {(a)};
    \end{tikzpicture}
    \hspace{1cm}
    \begin{tikzpicture}[node distance = .75cm, minimum size = .5cm, inner sep = .07cm, ultra thick]
        \sheepHex{red}{red}{1}{(0,0)}
        \emptyHex{Empty}{}{(0, -1.75)}
        \sheepHex{redMove}{red}{1}{(1.5, -.875)}
        \sheepHex{blue}{blue}{2}{(1.5, .875)}
        \emptyHex{EmptyC}{Out}{(1.5, 2.625)}
        \node[] (lbl) at (.75, -3.25) {(b)};
    \end{tikzpicture}
    \end{center}
    \caption{Variable gadget after moves.  (a) is the result of both moves if Blue moves first, which activates the output.  (b) is Red's best first move.  In this case, Blue will have to move their excess token out of the gadget, leaving it inactive.}
    \label{fig:variable2}
\end{figure}

    Next we describe the Variable gadget, shown in figures \ref{fig:variable} and \ref{fig:variable2}.  In this gadget, both players want to be the first to play.  If Blue goes first, they can move a sheep down, leaving the output clear so the variable is True (active).  If Red goes first, they set the variable to a False value by moving a token to the right, forcing Blue to move up and deactivate the output.

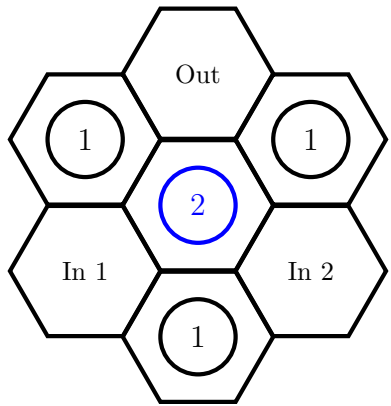
\begin{figure}[h!]
    \begin{center}
    \begin{tikzpicture}[node distance = .75cm, minimum size = .5cm, inner sep = .07cm, ultra thick]
        \sheepHex{blue}{blue}{2}{(0,0)}
        \emptyHex{EmptyA}{In 1}{(-1.5, -.875)}
        \blockHex{blockA}{(-1.5, .875)}
        \emptyHex{EmptyB}{In 2}{(1.5, -.875)}
        \blockHex{blockB}{(1.5, .875)}
        \blockHex{blockE}{(0, -1.75)}
        \emptyHex{EmptyC}{Out}{(0, 1.75)}
    \end{tikzpicture}
    \end{center}
    \caption{Or gadget.  Blue can keep the output clear if either of the inputs are clear. }
    \label{fig:or}
\end{figure}

    In the Or gadget, shown in figure \ref{fig:or}, Blue can activate the output whenever either of the inputs are active.

\begin{figure}[h!]
    \begin{center}
    \begin{tikzpicture}[node distance = .75cm, minimum size = .5cm, inner sep = .07cm, ultra thick]
        \sheepHex{blueMiddle}{blue}{3}{(0,0)}
        \blockHex{blockA}{(-3, -3.5)}
        \blockHex{blockB}{(-1.5, -2.625)}
        \blockHex{blockC}{(0, -1.75)}
        \blockHex{blockD}{(1.5, -2.625)}
        \blockHex{blockE}{(3, -3.5)}

        \emptyHex{middleEmpty}{}{(0, 1.75)}
        \emptyHex{belowEmpty}{}{(0, 3.5)}
        \sheepHex{blueIn}{blue}{2}{(0, 5.25)}
        \emptyHex{out}{Out}{(0, 7)}
        
        \emptyHex{leftEmpty}{}{(-1.5, -.875)}
        \sheepHex{blueLeft}{blue}{2}{(-3, -1.75)}
        \emptyHex{inLeft}{In 1}{(-4.5, -2.625)}
        \blockHex{blockF}{(-4.5, -.875)}
        \blockHex{blockG}{(-3, 0)}
        \blockHex{blockH}{(-1.5, .875)}
        \blockHex{blockI}{(-1.5, 2.625)}
        \blockHex{blockJ}{(-1.5, 4.375)}
        \blockHex{blockK}{(-1.5, 6.125)}
        
        \emptyHex{rightEmpty}{}{(1.5, -.875)}
        \sheepHex{blueRight}{blue}{2}{(3, -1.75)}
        \emptyHex{inRight}{In 2}{(4.5, -2.625)}
        \blockHex{blockF2}{(4.5, -.875)}
        \blockHex{blockG2}{(3, 0)}
        \blockHex{blockH2}{(1.5, .875)}
        \blockHex{blockI2}{(1.5, 2.625)}
        \blockHex{blockJ2}{(1.5, 4.375)}
        \blockHex{blockK2}{(1.5, 6.125)}
    \end{tikzpicture}
    \end{center}
    \caption{And Gadget.  If both inputs are active, then the stack of three in the middle can make moves to both the left and right, freeing up the top stack to send a token down and activate.  If either of the inputs are inactive, then at least one (and maybe both) of the tokens in the middle must move up, keeping the output inactive.}
    \label{fig:and}
\end{figure}
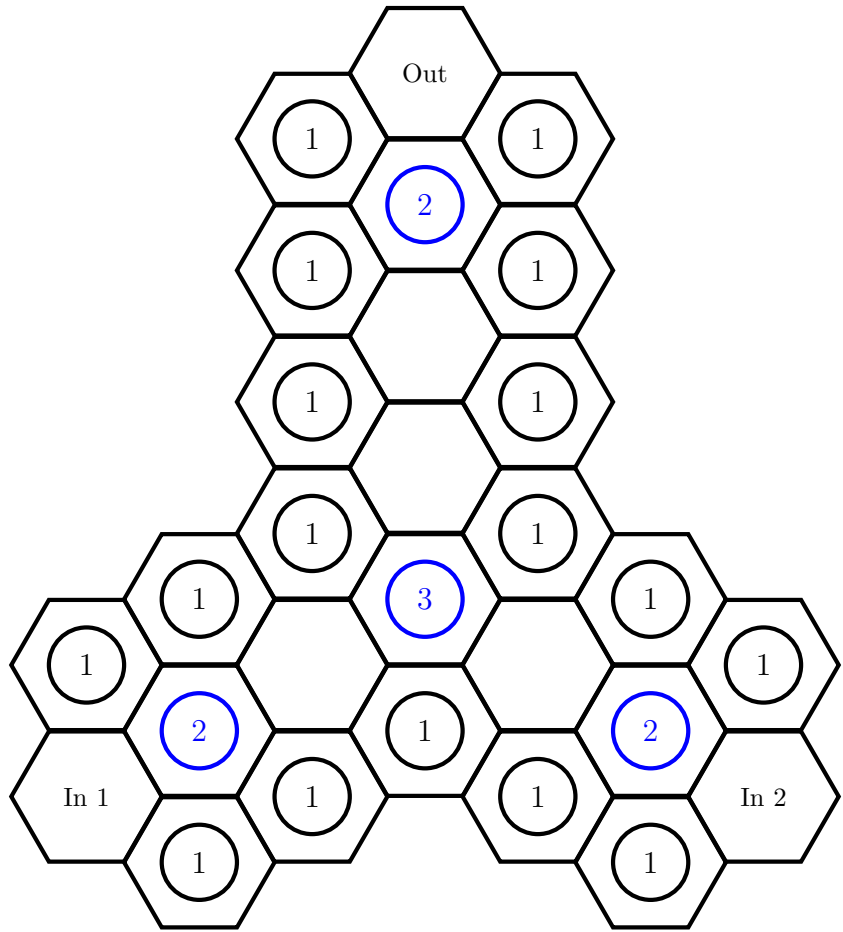

    For the And gadget, shown in figure \ref{fig:and}, the output will be active only if both inputs are active.  In that case, the two stacks on the sides can move to the inputs, leaving the two spaces below the stack of three blue sheep empty.

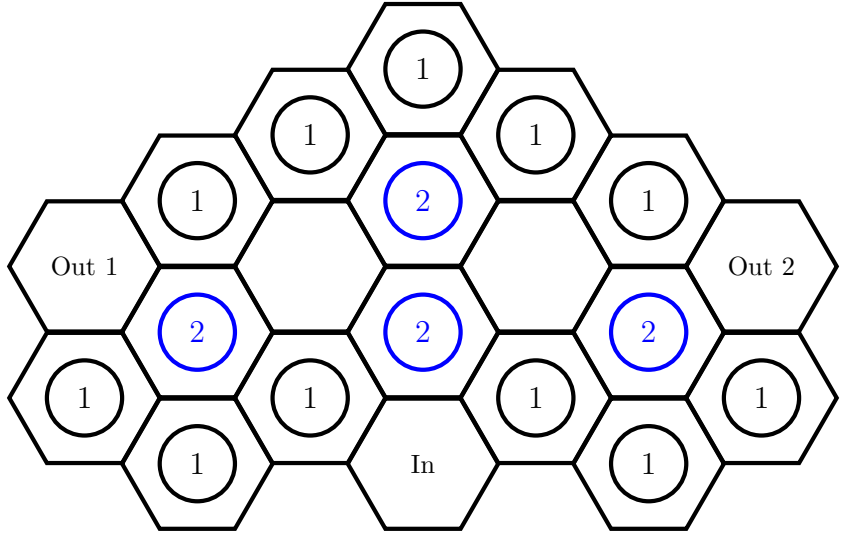
\begin{figure}[h!]
    \begin{center}
    \begin{tikzpicture}[node distance = .75cm, minimum size = .5cm, inner sep = .07cm, ultra thick]
        \sheepHex{blue}{blue}{2}{(0,0)}
        \sheepHex{blueTop}{blue}{2}{(0, 1.75)}
        \emptyHex{EmptyA}{}{(-1.5, .875)}
        \blockHex{blockA}{(-1.5, -.875)}
        \sheepHex{blueLeft}{blue}{2}{(-3, 0)}
        \emptyHex{inA}{Out 1}{(-4.5, .875)}
        \emptyHex{EmptyB}{}{(1.5, .875)}
        \blockHex{blockB}{(1.5, -.875)}
        \sheepHex{blueRight}{blue}{2}{(3, 0)}
        \emptyHex{EmptyC}{In}{(0, -1.75)}
        \emptyHex{inB}{Out 2}{(4.5, .875)}
        \blockHex{blockAA}{(-3, -1.75)}
        \blockHex{blockAB}{(-4.5, -.875)}
        \blockHex{blockC}{(-3, 1.75)}
        \blockHex{blockD}{(-1.5, 2.625)}
        \blockHex{blockE}{(0, 3.5)}
        \blockHex{blockF}{(1.5, 2.625)}
        \blockHex{blockG}{(3, 1.75)}
        \blockHex{blockH}{(4.5, -.875)}
        \blockHex{blockI}{(3, -1.75)}
    \end{tikzpicture}
    \end{center}
    \caption{Choice Gadget.  If the input is active, then Blue can choose one of the outputs to be active by moving the excess token to the opposite side.  If the input is inactive, then both outputs must be inactive.}
    \label{fig:choice}
\end{figure}

    The Choice gadget, shown in figure \ref{fig:choice}, is an inverted version of And.  If the input is inactive, then both outputs will be made inactive, because the two middle piles of 2 will need to send tokens to the empty spaces in the interior.  If the input is active, then Blue can choose one of the outputs to be active by moving the top extra token to the opposite side they want to activate.

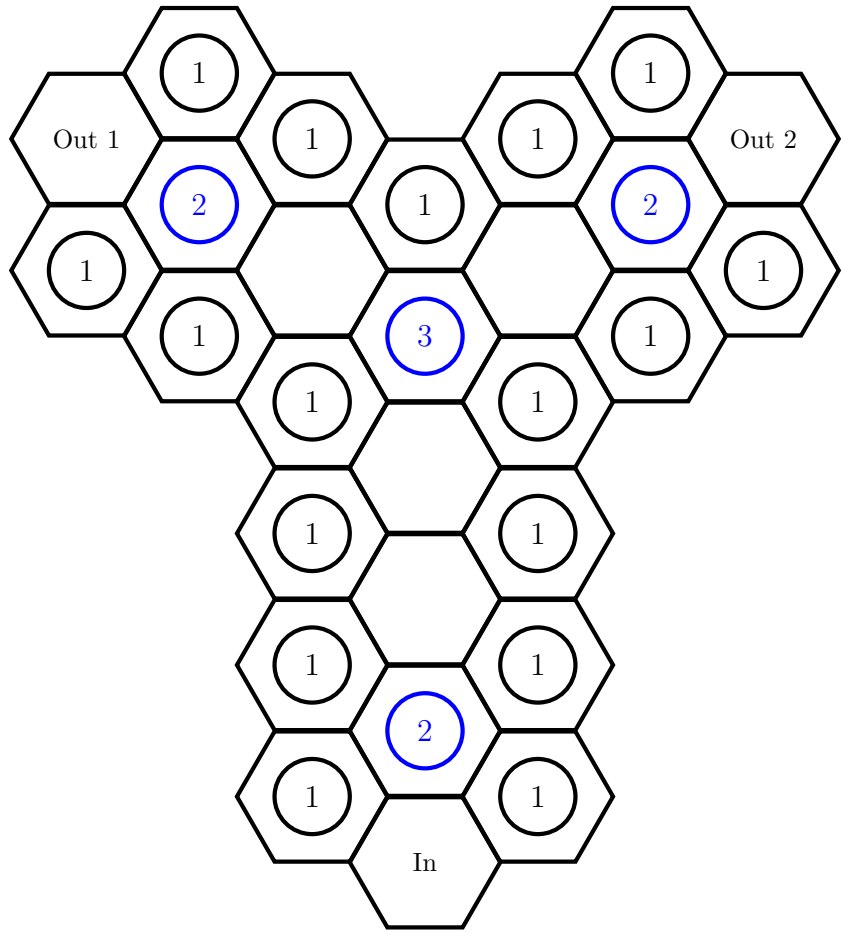
\begin{figure}[h!]
    \begin{center}
    \begin{tikzpicture}[node distance = .75cm, minimum size = .5cm, inner sep = .07cm, ultra thick]
        \sheepHex{blueMiddle}{blue}{3}{(0,0)}
        \blockHex{blockA}{(-3, 3.5)}
        \blockHex{blockB}{(-1.5, 2.625)}
        \blockHex{blockC}{(0, 1.75)}
        \blockHex{blockD}{(1.5, 2.625)}
        \blockHex{blockE}{(3, 3.5)}

        \emptyHex{middleEmpty}{}{(0,-1.75)}
        \emptyHex{belowEmpty}{}{(0, -3.5)}
        \sheepHex{blueIn}{blue}{2}{(0, -5.25)}
        \emptyHex{in}{In}{(0, -7)}
        
        \emptyHex{leftEmpty}{}{(-1.5, .875)}
        \sheepHex{blueLeft}{blue}{2}{(-3, 1.75)}
        \emptyHex{outLeft}{Out 1}{(-4.5, 2.625)}
        \blockHex{blockF}{(-4.5, .875)}
        \blockHex{blockG}{(-3, 0)}
        \blockHex{blockH}{(-1.5, -.875)}
        \blockHex{blockI}{(-1.5, -2.625)}
        \blockHex{blockJ}{(-1.5, -4.375)}
        \blockHex{blockK}{(-1.5, -6.125)}
        
        \emptyHex{rightEmpty}{}{(1.5, .875)}
        \sheepHex{blueRight}{blue}{2}{(3, 1.75)}
        \emptyHex{outRight}{Out 2}{(4.5, 2.625)}
        \blockHex{blockF2}{(4.5, .875)}
        \blockHex{blockG2}{(3, 0)}
        \blockHex{blockH2}{(1.5, -.875)}
        \blockHex{blockI2}{(1.5, -2.625)}
        \blockHex{blockJ2}{(1.5, -4.375)}
        \blockHex{blockK2}{(1.5, -6.125)}
    \end{tikzpicture}
    \end{center}
    \caption{Fanout Gadget.  If the input is active, then Blue can activate both outputs by moving the two tokens from the stack of three down.  Otherwise, they will be forced to leave both outputs inactive.}
    \label{fig:fanout}
\end{figure}

    The Fanout gadget is shown in figure \ref{fig:fanout}.  In this construction, if the input is active, then both outputs must be able to be activated, otherwise neither of them should be.  The central stack of three blue tokens is the mechanism that achieves this.  
    
    If the input is active, then the two excess central tokens can move down, which allows both outputs to be activated.  If, instead, the input is inactive, then the two excess tokens cannot go down because they would block the extra token adjacent to the input.  Instead, they must use the two moves to fill in the upper empty spaces, making the two outputs inactive.

    In order to complete the construction, we require another "Makeup" gadget to allow a place for Red to make moves after the Variables have been claimed.  In the overall construction:
    \begin{itemize}
        \item Blue moves first,
        \item Aside from the Goal gadget, Blue will be able to move all of their excess tokens, and
        \item If Blue can resolve the \bcl{} formula to True (if they can activate the Goal gadget) then they will be able to move the excess token off of the Goal as well.
    \end{itemize}

    That last move should win the game for Blue if it's possible, but they should lose otherwise.  Because of that, we want Red to make exactly the same number of moves across all gadgets not including the Goal.  Then, if Blue cannot activate the Goal, Red makes the last move and wins.  If Blue can activate the Goal, then they move one more time and win.

    Since Blue and Red have the same number of moves to make on Variable gadgets, we only need the Makeup gadget to account for Blue's moves on Wire Turns, Or, And, Choice, and Fanout.  For each of these, Blue moves, respectively, 1, 1, 4, 4, and 5 times.  Thus, the total number of Red moves on the Makeup gadget should be $k = a + b + 4c + 4d + 5e$, where $a$ is the number of Wire Turns, $b$ is the number of Or gadgets, $c$ is the number of Ands, $d$ is the number of Choices, and $e$ is the number of Fanouts.

    The Makeup gadget, then, consists of a linear stretch of $k+1$ hexagons surrounded by blocked spaces with a single stack of $k+1$ red sheep tokens on one space.  This provides Red the needed turns to take while Blue attempts to properly activate the Goal gadget. 

    With these listed gadgets, we can assemble the \bcl{} circuit in \rsBSheep, completing the reduction.  
\end{proof}

We can use aspects of our proof to show that the resulting game is not just \cclass{PSPACE}-hard, but also \cclass{PSPACE}-complete.

\begin{corollary}[Completeness]
    \rsBSheep{} is \cclass{PSPACE}-complete.
\end{corollary}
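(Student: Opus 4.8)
The main theorem already gives that \rsBSheep{} is \cclass{PSPACE}-hard, so the only thing left for completeness is membership: \rsBSheep{} $\in \cclass{PSPACE}$. The plan is to bound the length of every play by a polynomial and then run the obvious alternating depth-first search within that bound. This is the standard route for games, but it does require checking that \rsBSheep{} really has short plays, which is the one point worth spelling out.

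First I would establish a monovariant. A legal move peels a proper, non-bottom sub-stack off some occupied hex and slides it, as far as it can go, onto a hex that must be empty (it is empty because the rules force the tokens to travel maximally, stopping only at the board edge or another stack). The source hex stays occupied, since its bottom token is left behind. Hence every move increases the number of occupied hexes by exactly one. On a board with $n$ hexes there can be at most $n$ occupied hexes, so no play lasts more than $n$ moves; in particular the game always terminates and its game tree has depth at most $n$. (Alternatively, one can simply invoke the general fact that a two-player perfect-information game whose plays have polynomial length and whose move relation is polynomial-time computable is decidable in \cclass{PSPACE}.)

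Next I would describe the decision procedure directly. The input is a \rsBSheep{} position together with the player to move; the question is whether that player wins under normal play. Evaluate recursively: if the player to move has no legal move, return loss; otherwise enumerate the legal moves one at a time --- each is given by a source hex, a sub-stack height, and a direction, and verifying legality and computing the resulting position (including where the sliding stack stops) is clearly polynomial-time --- recurse on each child with the opponent to move, and return win iff some child is a loss for the opponent. By the monovariant the recursion nests at most $n$ deep, and each stack frame holds only an $O(n)$-size position plus a constant amount of state for the move currently being tried, so the whole computation runs in polynomial space. Thus \rsBSheep{} $\in \cclass{PSPACE}$.

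Combining this with the \cclass{PSPACE}-hardness from the main theorem yields that \rsBSheep{} is \cclass{PSPACE}-complete, and since the hardness construction only ever uses stacks of height at most $3$ (and has an equal number of tokens and spaces), the completeness claim holds even under those restrictions. I expect no real obstacle here: the only subtlety is the termination/length bound, and that each move adds exactly one occupied hex is immediate from the ruleset.
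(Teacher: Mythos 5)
Your proof is correct and follows the same overall plan as the paper: bound the depth of the game tree by a polynomial, then run the standard recursive minimax search, which needs only polynomial space. The difference is in the key lemma giving the depth bound. The paper argues that in the relevant positions the total number of tokens is at most the number of spaces, deduces the polynomial depth from that, and accordingly hedges: it restricts the membership argument to such positions and leaves very large stacks to Open Problem \ref{open:exponentialTokens}. Your monovariant --- every legal move lands the moving sub-stack on a previously empty hex while the source hex stays occupied, so the number of occupied hexes grows by exactly one per move --- bounds the length of any play by the number of hexes $n$ independently of how many tokens sit on the board. That is a genuinely stronger statement: combined with the fact that positions can be stored, and moves enumerated, in space polynomial in the input size even when stack heights are written in binary, it would place \rsBSheep{} in \cclass{PSPACE} even for positions with exponentially large stacks, which the paper deliberately does not claim (and which bears on Open Problem \ref{open:exponentialTokens}, since \cclass{EXPTIME}-hardness of such positions would then force \cclass{EXPTIME} $=$ \cclass{PSPACE}). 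For the corollary as stated, where the hardness construction uses as many tokens as spaces, both arguments suffice, and your occupied-hex argument is the cleaner of the two.
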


\begin{proof}
    In our gadgets, the total number of tokens on the board is less than or equal to the number of open spaces.  This means that the depth of the game tree is polynomial in the size of the board.  (We leave the case where there are significantly more tokens to open problem \ref{open:exponentialTokens}.)  Thus we can traverse the entire game tree using a polynomial amount of bits and the problem is solvable in \cclass{PSPACE}.
\end{proof}

\section{Future Work and Open Problems}

This paper shows that \rsBSheep{} is \cclass{PSPACE}-hard.  In the cases where the largest pile size is polynomial in the number of spaces, the positions are in \cclass{PSPACE} as well, so the game is \cclass{PSPACE}-complete.  In the published game, the number of spaces is always equal to the total number of tokens.  Families of positions with exponential numbers of tokens, while being solvable in \cclass{EXPTIME}, might not be in \cclass{PSPACE}.

\begin{open}
    \label{open:exponentialTokens}
    Is \rsBSheep{} \cclass{EXPTIME}-hard when there are stacks of tokens exponential in the number of spaces on the board?
\end{open}

On the other end, our constructions with piles of sizes 1, 2, and 3.  Although we can modify the Makeup gadget to use $k$ piles of size 2 instead of a single $k+1$ pile, it's not clear whether we can modify the And and Fanout gadgets.  

\begin{open}
    Is \rsBSheep{} \cclass{PSPACE}-hard when the stacks have only one or two tokens?
\end{open}

\bibliographystyle{plainurl}

\end{document}